\pgfplotsset{compat=1.17} 
\tikzset{myblock/.style={rectangle, draw, thin, minimum width=0.6cm, minimum height=0.6cm},font=\footnotesize,align=center}%
\tikzset{mywideblock/.style={rectangle, draw, thin, minimum width=1cm, minimum height=0.6cm},font=\footnotesize,align=center}%
\newcommand{\myline}[2]{
\path(#1.east) --(#2.west)  coordinate[pos=0.4](mid);
\draw[-latex] (#1.east) -| (mid) |- (#2.west);
}
\def\BibTeX{{\rm B\kern-.05em{\sc i\kern-.025em b}\kern-.08em
    T\kern-.1667em\lower.7ex\hbox{E}\kern-.125emX}}
\pgfplotsset{compat=1.17} 
\definecolor{KITblack}{rgb}{0.25,0.25,0.25}
\definecolor{KITbrown}{rgb}{0.65,0.51,0.18}
\definecolor{kit-green100}{rgb}{0,.59,.51}
\definecolor{kit-green100}{rgb}{0,.59,.51}
\definecolor{kit-green70}{rgb}{.3,.71,.65}
\definecolor{kit-green50}{rgb}{.50,.79,.75}
\definecolor{kit-green30}{rgb}{.69,.87,.85}
\definecolor{kit-green15}{rgb}{.85,.93,.93}
\definecolor{KITgreen}{rgb}{0,.59,.51}
\definecolor{KITpalegreen}{RGB}{130,190,60}
\colorlet{kit-maigreen100}{KITpalegreen}
\colorlet{kit-maigreen70}{KITpalegreen!70}
\colorlet{kit-maigreen50}{KITpalegreen!50}
\colorlet{kit-maigreen30}{KITpalegreen!30}
\colorlet{kit-maigreen15}{KITpalegreen!15}
\definecolor{KITblue}{rgb}{.27,.39,.66}
\definecolor{kit-blue100}{rgb}{.27,.39,.67}
\definecolor{kit-blue70}{rgb}{.49,.57,.76}
\definecolor{kit-blue50}{rgb}{.64,.69,.83}
\definecolor{kit-blue30}{rgb}{.78,.82,.9}
\definecolor{kit-blue15}{rgb}{.89,.91,.95}
\definecolor{KITyellow}{rgb}{.98,.89,0}
\definecolor{kit-yellow100}{cmyk}{0,.05,1,0}
\definecolor{kit-yellow70}{cmyk}{0,.035,.7,0}
\definecolor{kit-yellow50}{cmyk}{0,.025,.5,0}
\definecolor{kit-yellow30}{cmyk}{0,.015,.3,0}
\definecolor{kit-yellow15}{cmyk}{0,.0075,.15,0}
\definecolor{KITorange}{rgb}{.87,.60,.10}
\definecolor{kit-orange100}{cmyk}{0,.45,1,0}
\definecolor{kit-orange70}{cmyk}{0,.315,.7,0}
\definecolor{kit-orange50}{cmyk}{0,.225,.5,0}
\definecolor{kit-orange30}{cmyk}{0,.135,.3,0}
\definecolor{kit-orange15}{cmyk}{0,.0675,.15,0}
\definecolor{KITred}{rgb}{.63,.13,.13}
\definecolor{kit-red100}{cmyk}{.25,1,1,0}
\definecolor{kit-red70}{cmyk}{.175,.7,.7,0}
\definecolor{kit-red50}{cmyk}{.125,.5,.5,0}
\definecolor{kit-red30}{cmyk}{.075,.3,.3,0}
\definecolor{kit-red15}{cmyk}{.0375,.15,.15,0}
\definecolor{KITpurple}{RGB}{160,0,120}
\colorlet{kit-purple100}{KITpurple}
\colorlet{kit-purple70}{KITpurple!70}
\colorlet{kit-purple50}{KITpurple!50}
\colorlet{kit-purple30}{KITpurple!30}
\colorlet{kit-purple15}{KITpurple!15}
\definecolor{KITcyanblue}{RGB}{80,170,230}
\colorlet{kit-cyanblue100}{KITcyanblue}
\colorlet{kit-cyanblue70}{KITcyanblue!70}
\colorlet{kit-cyanblue50}{KITcyanblue!50}
\colorlet{kit-cyanblue30}{KITcyanblue!30}
\colorlet{kit-cyanblue15}{KITcyanblue!15}
\newcommand\blfootnote[1]{%
    \begingroup
    \renewcommand\thefootnote{}\footnote{#1}%
    \addtocounter{footnote}{-1}%
    \endgroup
}
\newcommand\extrafootertext[1]{%
    \bgroup
    \renewcommand\thefootnote{\fnsymbol{footnote}}%
    \renewcommand\thempfootnote{\fnsymbol{mpfootnote}}%
    \footnotetext[0]{#1}%
    \egroup
}
\newtheorem{theorem}{Theorem}
\newtheorem{lemma}{Lemma}
\begin{document}
\title{%
Subcode Ensemble Decoding of Linear Block Codes\\
}

\author{\IEEEauthorblockN{Jonathan Mandelbaum, Holger Jäkel, and Laurent Schmalen}
\IEEEauthorblockA{Communications Engineering Lab, Karlsruhe Institute of Technology (KIT), 76131 Karlsruhe, Germany\\
\texttt{jonathan.mandelbaum@kit.edu}}
}

\maketitle

\begin{abstract}
Low-density parity-check (LDPC) codes together with belief propagation (BP) decoding yield exceptional error correction capabilities in the large block length regime. Yet, there remains a gap between BP decoding and maximum likelihood decoding for short block length LDPC codes. In this context, ensemble decoding schemes yield both reduced latency and good error rates.
In this paper, we propose subcode ensemble decoding (SCED), which employs an ensemble of decodings on different subcodes of the code. To ensure that all codewords are decodable, we use the concept of linear coverings and explore approaches for sampling suitable ensembles for short block length LDPC codes. 
Monte-Carlo simulations conducted for three LDPC codes demonstrate that SCED improves decoding performance compared to stand-alone decoding and automorphism ensemble decoding. In particular,
in contrast to existing schemes, e.g., multiple bases belief propagation and automorphism ensemble decoding, SCED does not require the NP-complete search for low-weight dual codewords or knowledge of the automorphism group of the code, which is often unknown.
\end{abstract}
\section{Introduction}
\vspace*{-1.25em}
\blfootnote{This work has received funding from the 
German Federal Ministry of Education and Research (BMBF) within the project Open6GHub (grant agreement 16KISK010) and the European Research Council (ERC) under the European Union’s Horizon 2020 research and innovation programme (grant agreement No. 101001899).}

In the large block length regime, low-density parity-check (LDPC) codes provide exceptional error-correction performance when decoded with a low-complexity message passing algorithm, often denoted belief propagation (BP) decoding \cite{MCT08}.
However, emerging applications for 6G, such as ultra-reliable low-latency and machine-type communications, demand short block length codes \cite{8594709,10571997}.
In this regime, LDPC codes with BP decoding exhibit a performance gap compared to maximum likelihood~(ML) decoding. Bridging this gap is essential in the search for a unified coding scheme \cite{10274812}, i.e., a single code family that performs well across all block lengths. %

Ensemble decoding has demonstrated performance improvements in this regime by improving both latency and error-correction capabilities\cite{AED_RMcodes,MBBP1}. 
They are based on the observation that, for any binary-input memoryless symmetric output channel, the error probability of message-passing decoding depends on the underlying graphical representation of the code 
and the noise introduced by the channel rather than on the transmitted codeword itself \cite[Lemma 4.90]{MCT08}.
Ensemble decoding schemes exploit this property by using either an ensemble of varied noise representations or of altered graphs for decoding\cite{AED_RMcodes,geiselhart2023ratecompatible,stuttgart_ldpc_aed,8723089,7360541,MBBP1,MBBP2,MBBP3_withLeaking,mandelbaum2024endomorphisms,mandelbaum2023generalized}.
On the one hand, automorphism ensemble decoding (AED), generalized AED (GAED), and endomorphism ensemble decoding (EED) use knowledge of the code structure to alter the noise representation.
For instance, assuming the transmission of a codeword ${\bm{x}\in \mathcal{C}}$ over an additive white Gaussian noise (AWGN) channel, AED uses an automorphism $\pi$ to map the received word $\bm{y}=\bm{x}+\bm{n}$ onto $\pi(\bm{y})=\pi(\bm{x})+\pi(\bm{n})$, i.e., onto a possibly different codeword $\pi(\bm{x})\in\mathcal{C}$ superimposed with an altered noise representation $\pi(\bm{n})$.
If the automorphism group is known and the effect of its automorphisms are not absorbed by the symmetry of the decoder, AED can improve decoding performance \cite{AED_RMcodes,geiselhart2023ratecompatible,stuttgart_ldpc_aed}. 
 Other schemes that alter the noise representation are noise-aided ensemble decoding (NED) \cite{8723089} and saturated belief propagation (SBP) \cite{7360541}.
On the other hand, multiple bases belief propagation (MBBP)
and scheduling ensemble decoding (SED)
improve performance by leveraging ensembles of different graphs or variants of BP. For instance, MBBP uses a set of equivalent parity-check matrices (PCMs), i.e., different graphs,
while SED varies the scheduling of its constituent layered decodings \cite{MBBP1,krieg2024comparativestudyensembledecoding_arxiv}.
Yet, MBBP requires the NP-complete search for low-weight dual codewords, severely limiting its application\cite{intractabilityofminimumdistance,kraft_setmatrix}.%

In this work, we propose subcode ensemble decoding (SCED), a scheme that uses a set of decodings on different subcodes of the code. SCED employs a linear covering in which the union of those subcodes cover the code \cite{clark_covering_numbers}, to ensure that all codewords are decodable. 
We introduce the \emph{relative coverage} to compare ensemble decoding schemes and to simplify the search for suitable ensembles of a given size.
In particular, we demonstrate that generating effective ensembles for SCED of short LDPC codes is straightforward without requiring knowledge of the code structure beyond its PCM or any specific BP decoding requirements (e.g., layered scheduling).
Monte-Carlo simulations show that SCED yields improved error correction capabilities compared to AED.
Notably, SCED results in gains in terms of error probability compared to stand-alone decoding for a fixed total number of iterations,
while offering reduced latency, due to its parallelizable structure in combination with a reduced maximum number of iterations per constituent BP decoding.

\section{Preliminaries}

In this work, we consider binary linear block codes $\mathcal{C}(n,k)$ which are $k$-dimensional subspaces of $\mathbb{F}_2^n$.
The parameters $n\in \mathbb{N}$ and $k\in \mathbb{N}$ denote the block length and information length, respectively, and are omitted when clear from the context. Linear block codes can be defined via their non-unique parity\--check matrix (PCM) $\bm{H}\in \mathbb{F}_2^{m\times n}$, such that
$$\mathcal{C}\left(n,k\right)=\left\{\bm{x}\in \mathbb{F}_2^n:\bm{H} \bm{x} = \bm{0}\right\}=\mathrm{Null}(\bm{H}).$$
A linear subcode ${\mathcal{C}_\mathrm{s}(n,k')}$ of a code $\mathcal{C}(n,k)$, denoted as ${\mathcal{C}_\mathrm{s}\subseteq\mathcal{C}}$, is a $k'$-dimensional subspace of $\mathcal{C}$, where we assume $k'\leq k$.
For a proper linear subcode, $\subseteq$ is replaced by $\subset$.

LDPC codes are linear block codes characterized by sparse PCMs. They are typically decoded using variants of BP, such as the sum-product algorithm (SPA) and the (normalized) min-sum algorithm (MSA)\cite{MCT08}.
BP decoding operates on the Tanner graph of the code, where messages---typically represented as log-likelihood ratios (LLRs)---are iteratively exchanged along its edges. The Tanner graph is a bipartite graph representation of a PCM comprising two disjoint sets of vertices: variable nodes (VNs) and check nodes (CNs). The VN $\mathrm{v}_j$ corresponds to column $j$ of the PCM, representing a code bit, while CN $\mathrm{c}_i$ corresponds to row $i$ of the PCM, representing a parity check. VN $\mathrm{v}_j$ is connected to CN $\mathrm{c}_i$ if $H_{i,j}=1$ \cite{MCT08}, where $H_{i,j}$ denotes the element in row $i$ and column $j$ of $\bm{H}$.

\section{Subcode Ensemble Decoding} \label{sec:sced}
\begin{figure}
    \centering
    \begin{tikzpicture}	  
\DeclareRobustCommand{\svdots}{%
  \vbox{%
    \baselineskip=0.33333\normalbaselineskip
    \lineskiplimit=0pt
    \hbox{.}\hbox{.}\hbox{.}%
    \kern-0.2\baselineskip
  }%
}

\tikzset{mywideblock/.style={rectangle, draw, thin, minimum width=0.97cm, minimum height=0.5cm},font=\footnotesize,align=center}%
    \node (input) at (-1,-0.6) {$\bm{y}$};    
    \draw[fill] (-0.19,-0.6) circle (1pt);

    \node (debox_1) [mywideblock] at (1.2,0 )  {$\mathrm{Dec}_1$};
    \node (debox_K) [mywideblock] at (1.2,-1.2)  {$\mathrm{Dec}_K$};

    \node (ml_1) [right=1cm of debox_1] {};
    \node (ml_K) [right=1cm of debox_K] {};

    \node (ml_box) [rectangle, draw, thin, minimum width=0.6cm, minimum height=1.6cm] at (3,-0.6)  {\rotatebox{90}{ML-in-the-list}};
    \node (output) [right=0.3cm of ml_box] {$\hat{\bm{x}}$};

    \node (dots_box) [below=0.1cm of debox_1]  {$\svdots$};
    \myline{input}{debox_1};
    \myline{input}{debox_K};

    \draw [-latex] (debox_1) -- (ml_1);
    \draw [-latex] (debox_K) -- (ml_K);

    \draw [-latex] (ml_box) -- (output);	

\end{tikzpicture}
    \caption{Block diagram of SCED using $K$ different subcodes $\mathcal{C}_i\subseteq \mathcal{C}$ and their respective decoder $\mathrm{Dec}_i$.  }
    \label{figure:sced}
\vspace*{-1.5em}
\end{figure}
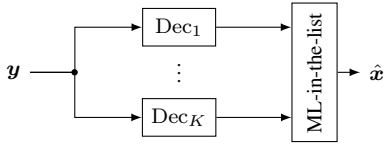

Fig.~\ref{figure:sced} 
 depicts the block diagram of subcode ensemble decoding (SCED) of an arbitrary linear code $\mathcal{C}$. We consider the transmission of a codeword $\bm{x}\in \mathcal{C}$ over an arbitrary channel with output alphabet $\mathcal{Y}$. The receiver observes ${\bm{y}\in \mathcal{Y}^n}$ which is then used as input to $K$ parallel decoding algorithms ${\mathrm{Dec}_i:\mathcal{Y}^n\rightarrow \mathbb{F}^n,i\in [K]:=\{1,2,\ldots,K\}}$, called \emph{paths}, yielding $K$ estimates $\hat{\bm{x}}_i=\mathrm{Dec}_i(\bm{y})$. In contrast to other ensemble decoding schemes, SCED possibly uses proper subcodes ${\mathcal{C}_i\subset \mathcal{C}}$ and their respective decoding in every path.
 The final estimate $\hat{\bm{x}}$ of SCED is chosen according to an {\emph{ML-in-the-list-rule}}\cite{AED_RMcodes}:
\begin{align*}
    \hat{\bm{x}}:= \arg \max_{\bm{x}\in \mathcal{L}} L(\bm{y}|\bm{x}),
\end{align*}
where $L(\bm{y}|\bm{x})$ denotes the log-likelihood and with list
\[
\mathcal{L}:=\begin{cases}
    \{\hat{\bm{x}}_i:i\in[K],\hat{\bm{x}}_i\in \mathcal{C}\},& \text{if } \exists i\in[K]:\hat{\bm{x}}_i\in \mathcal{C} \\
    \{\hat{\bm{x}}_i:i\in[K]\},& \text{otherwise.}
\end{cases}
\]

To potentially decode all codewords, the ensemble of subcodes must constitute a \emph{linear covering} (LC).
Following the definition for subspaces of vector spaces in \cite{clark_covering_numbers}, we define the LC of a code $\mathcal{C}$ as a set of subcodes $\{\mathcal{C}_i:i\in {[K]}\}$ with 
\begin{equation}
\bigcup_{i=1}^K \mathcal{C}_i= \mathcal{C}.\label{eq:general_cover}
\end{equation}

\section{Approaches to Choose Suitable Subcodes}\label{sec:choose_sub}
From now on, we consider SCED with all path decodings being BP decoding and we assume ${\mathcal{C}_1= \mathcal{C}}$ such that ${\mathrm{Dec}_1=\mathrm{BP}(\bm{H})}$, i.e., the additional paths complement stand-alone BP decoding.
Given a PCM $\bm{H}$, a PCM $\bm{H}_\ell$ of a subcode ${\mathcal{C}_\ell\subset\mathcal{C}}$ can be obtained via 
\vspace*{-0.5em}
\begin{equation}\label{eq:inducing_subcode}
    \bm{H}_\ell=\begin{pmatrix}
    \bm{H}\\
    \bm{h}_\ell
\end{pmatrix},
\end{equation}
i.e., by appending at least one arbitrary row $\bm{h}_\ell\in\mathbb{F}_2^{1\times n}$ to $\bm{H}$.
In the following, we say that $\bm{H}_\ell$ \emph{induces a subcode} to refer to the process of constituting the subcode $\mathcal{C}(\bm{H}_\ell)\subseteq\mathcal{C}(\bm{H})$. 

Note that if $\bm{h}_\ell$ is linearly independent of the rows of $\bm{H}$,  using (\ref{eq:inducing_subcode}) $\bm{H}_\ell$ induces a proper subcode $\mathcal{C}_\ell \subset \mathcal{C}$ with ${\mathrm{dim}(\mathcal{C}_\ell)=k-1}$.
Hence, SCED simplifies the search for rows compared to MBBP by allowing to append rows that are possibly linearly independent of the rows of $\bm{H}$ avoiding the NP-complete search for low-weight dual codewords.
Similarly, by simply appending rows, we avoid searching for sparse PCMs for each subcode and benefit from the originally designed PCMs of LDPC codes, which are well-suited for BP.

\emph{Remark:} Note that by appending a matrix $\bm{\mathcal{H}}_\ell\in \mathbb{F}_2^{m_\ell\times n}$ rather than the row $\bm{h}_\ell$ in (\ref{eq:inducing_subcode}), the approach can be generalized to use lower dimensional subcodes. Our simulations show promising results for ${m_\ell=1}$ such that we constrain ourselves to append row vectors, i.e, $\bm{\mathcal{H}}_\ell=\bm{h}_\ell$, in the following.

\subsection{Suitable Subcodes for BP Decoding}\label{sec:subcodes_for_bp}
The performance of BP decoding depends on the structure of the Tanner graph. Typically, graphs related to sparse PCMs with no (or only a few) 4-cycles yield the best decoding performance.
With this in mind, we investigate two approaches to sample the rows $\bm{h}_\ell,\,\ell \in [K]\setminus\{1\}$, i.e., the rows that, using (\ref{eq:inducing_subcode}) induce the paths that complement stand-alone BP.
Either, we sample the elements of $\bm{h}_\ell$ according to a Bernoulli distribution ${(\bm{h}_\ell)_{i}\overset{\mathrm{iid}}{\sim} \mathcal{B}(p)}$. 
We choose a small probability $p$ to obtain sparse rows. If $\bm{h}_\ell=\bm{0}$, we resample $\bm{h}_\ell$ as an all-zero $\bm{h}_\ell$ has no effect on the decoding behavior.
Or, we sample rows $\bm{h}_\ell$ with $\mathrm{w}_\mathrm{H}(\bm{h}_\ell)=d_\mathrm{c}$, where $\mathrm{w}_\mathrm{H}$ denotes the Hamming weight, and which introduce no new 4-cycles%
, i.e., the number of {4-cycles} of $\bm{H}_\ell$ equals the number of 4-cycles in $\bm{H}$.

\subsection{Linear Covering of Codes}
By employing decodings of the subcodes, not all paths can recover the transmitted codeword $\bm{x}$ since possibly $\bm{x}\notin\mathcal{C}_i$.
Therefore, the question arises if it is beneficial to choose ${\{\mathcal{C}_i:i\in[K]\setminus\{1\}\}}$ fulfilling (\ref{eq:general_cover}),
i.e., ${\forall\bm{x}\in\mathcal{C}}, \exists{ \ell \in[K]\setminus\{1\}}: \bm{x}\in\mathcal{C}_\ell $.
 This design goal appears reasonable because the error probability of BP decoding and many of its variants is independent of the transmitted codeword. 
 Note that the smallest number of proper linear subcodes to constitute an LC is $3$ \cite{clark_covering_numbers}.

\begin{theorem}\label{theorem:lc_k-1_dim} Let ${\bm{h}_1,\bm{h}_2\in \mathbb{F}^{1\times n}}$ be two row vectors that are linearly independent of the rows of a PCM $\bm{H}$ of $\mathcal{C}$ and let ${\bm{h}_3=\bm{h}_2+\bm{h}_1}$.
Using (\ref{eq:inducing_subcode}), they induce subcodes ${\mathcal{C}_1,\mathcal{C}_2\subset \mathcal{C}},$ and $\mathcal{C}_3\subseteq\mathcal{C}$, respectively, such that $\{\mathcal{C}_1,\mathcal{C}_2,\mathcal{C}_3\}$ fulfills (\ref{eq:general_cover}).
\end{theorem}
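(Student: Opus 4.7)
My plan is to unpack the definitions and then reduce the covering claim to a short case analysis based on the two possible values of $\bm{h}_i\bm{x} \in \mathbb{F}_2$.

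First, I would make the subcode structure explicit. Using (\ref{eq:inducing_subcode}), we have $\mathcal{C}_i = \{\bm{x}\in\mathcal{C} : \bm{h}_i\bm{x}^\top = 0\}$ for $i=1,2,3$. Since $\bm{h}_1$ and $\bm{h}_2$ are each linearly independent of the rows of $\bm{H}$, appending either one raises the rank by one, so $\dim(\mathcal{C}_1)=\dim(\mathcal{C}_2)=k-1$, giving the proper inclusions $\mathcal{C}_1,\mathcal{C}_2\subset\mathcal{C}$. For $\mathcal{C}_3$, I would explain why only $\subseteq$ is asserted in the statement: if $\bm{h}_3=\bm{h}_1+\bm{h}_2$ happens to lie in the row span of $\bm{H}$, then $\mathcal{C}_3=\mathcal{C}$; otherwise $\mathcal{C}_3$ is a proper subcode of dimension $k-1$. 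Either way, $\mathcal{C}_3\subseteq\mathcal{C}$.

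The heart of the proof is the covering property. I would fix an arbitrary $\bm{x}\in\mathcal{C}$ and observe that since we are working over $\mathbb{F}_2$, each $\bm{h}_i\bm{x}^\top$ is either $0$ or $1$, and by linearity $\bm{h}_3\bm{x}^\top = \bm{h}_1\bm{x}^\top + \bm{h}_2\bm{x}^\top$. A trivial case split then suffices: if $\bm{h}_1\bm{x}^\top=0$ then $\bm{x}\in\mathcal{C}_1$; if $\bm{h}_2\bm{x}^\top=0$ then $\bm{x}\in\mathcal{C}_2$; and in the remaining case $\bm{h}_1\bm{x}^\top=\bm{h}_2\bm{x}^\top=1$, which forces $\bm{h}_3\bm{x}^\top=0$ and hence $\bm{x}\in\mathcal{C}_3$. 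In every case $\bm{x}$ lies in $\mathcal{C}_1\cup\mathcal{C}_2\cup\mathcal{C}_3$, establishing (\ref{eq:general_cover}).

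There is no real obstacle here: the argument is essentially the binary pigeonhole observation that among three $\mathbb{F}_2$-linear forms satisfying $\bm{h}_3=\bm{h}_1+\bm{h}_2$, any vector must be annihilated by at least one of them. The only subtlety I want to be careful about in the write-up is the distinction between $\mathcal{C}_3\subseteq\mathcal{C}$ and $\mathcal{C}_3\subset\mathcal{C}$, and making sure the hypothesis that $\bm{h}_1,\bm{h}_2$ are linearly independent of the rows of $\bm{H}$ (not merely of each other) is used only where needed, namely to guarantee that $\mathcal{C}_1$ and $\mathcal{C}_2$ are \emph{proper} subcodes and therefore actually contribute to SCED.
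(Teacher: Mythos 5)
Your proof is correct, but it follows a genuinely different and more economical route than the paper. You reduce everything to the observation that membership in $\mathcal{C}_i$ is, for $\bm{x}\in\mathcal{C}$, equivalent to $\bm{h}_i\bm{x}=0$, and then apply the binary pigeonhole argument: if $\bm{h}_1\bm{x}=1$ and $\bm{h}_2\bm{x}=1$, then $\bm{h}_3\bm{x}=(\bm{h}_1+\bm{h}_2)\bm{x}=0$ over $\mathbb{F}_2$, so every codeword lands in at least one of the three subcodes. The paper instead splits into the cases $\mathcal{C}_1=\mathcal{C}_2$ (where it shows $\bm{h}_3$ is linearly dependent on the rows of $\bm{H}$, so $\mathcal{C}_3=\mathcal{C}$ and the covering is trivial) and $\mathcal{C}_1\neq\mathcal{C}_2$ (where it picks $\bm{b}_1\in\mathcal{C}_1\setminus\mathcal{C}_2$ and $\bm{b}_2\in\mathcal{C}_2\setminus\mathcal{C}_1$, extends to a basis of $\mathcal{C}$ mirroring the construction in Lemma~\ref{lemma:proof_of_existence}, and verifies by an explicit parity-check computation that the residual codewords $\bm{b}_1+\bm{b}_2+\sum_{i\ge 3}\alpha_i\bm{b}_i$ satisfy $\bm{h}_3\bm{x}=0$). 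Your argument needs no case distinction, no basis extension, and no existence lemma, and it isolates cleanly where the hypothesis of linear independence from the rows of $\bm{H}$ is actually used (only to make $\mathcal{C}_1,\mathcal{C}_2$ proper, matching why the theorem asserts only $\mathcal{C}_3\subseteq\mathcal{C}$). What the paper's longer route buys is structural insight: it exhibits explicitly which codewords lie outside $\mathcal{C}_1\cup\mathcal{C}_2$ and ties the theorem to the constructive basis picture of Lemma~\ref{lemma:proof_of_existence}, and it records the useful equivalence that $\mathcal{C}_1\neq\mathcal{C}_2$ precisely when $\bm{h}_3$ is linearly independent of the rows of $\bm{H}$ -- information your streamlined proof deliberately sidesteps.
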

Note that in Theorem~\ref{theorem:lc_k-1_dim} only $\mathcal{C}_1$ and $\mathcal{C}_2$ are required to be proper subcodes.
Theorem~\ref{theorem:lc_k-1_dim} is proven in the appendix.
Based on Theorem~\ref{theorem:lc_k-1_dim}, we can find an LC by randomly sampling $\bm{h}_1$ and $\bm{h}_2$ and calculating $\bm{h}_3=\bm{h}_1+\bm{h}_2$.

\begin{algorithm}[bt!]
\caption{Generation of $4$-cycle free rows}\label{alg:append_row}
\begin{algorithmic}[1]
\STATE \textbf{Input:} Row weight $d_\mathrm{c}$, parity-check matrix $\bm{H}$, $\mathcal{F}_\mathrm{old}$
\STATE \textbf{Output:} New row $\bm{h}$ of weight $d_\mathrm{c}$ or failure
\STATE Initialize $\mathcal{F} \gets [n]$ or $\mathcal{F}_\mathrm{old}$
\STATE Initialize $\bm{h} \gets \bm{0}$
\WHILE{$\mathrm{w_H}(\bm{h}) < d_\mathrm{c}$}
    \IF{$\mathcal{F} = \emptyset$}
        \STATE \textbf{Return} failure
    \ENDIF
    \STATE Randomly select $j \in \mathcal{F}$
    \STATE $h_j \gets 1$
    \STATE $\mathcal{F} \gets \mathcal{F}\setminus \bigcup_{i\in[m]:H_{i,j}=1}\left\{\ell\in[n]:H_{i,\ell}=1\right\}$%
\ENDWHILE
\STATE \textbf{Return} $\bm{h}$
\end{algorithmic}
\end{algorithm}

We propose Algorithm~\ref{alg:append_row} to generate a row $\bm{h}$ with row weight $d_\mathrm{c}$ that does not contribute new $4$-cycles to $\bm{H}$.
 Algorithm~\ref{alg:append_row} employs a set $\mathcal{F}$ which contains feasible indices of non-zero positions of $\bm{h}$ that yield no additional 4-cycle.
It iteratively selects a random ${j\in\mathcal{F}}$ and sets ${h_j=1}$ until ${\mathrm{w_H}(\bm{h})=d_\mathrm{c}}$. After choosing ${j\in\mathcal{F}}$, 
all indices of possible non-zero positions that would constitute a 4-cycle together with $h_j$
 are removed from $\mathcal{F}$.
Those indices are given by 
 $$\bigcup_{i\in[m]:H_{i,j}=1}\left\{\ell\in[n]:H_{i,\ell}=1\right\}.$$
If $\mathcal{F}=\emptyset$ before $\mathrm{w_H}(\bm{h})=d_\mathrm{c}$ is obtained, the algorithm fails to find a new row, requiring the procedure to be repeated.

To construct rows that do not necessarily constitute an LC, we initialize $\mathcal{F}=[n]$. To sample three rows that result in an LC and do not contribute new 4-cycles, we first run Algorithm~\ref{alg:append_row} for $\bm{h}_1$ and use the resulting $\mathcal{F}$, denoted as $\mathcal{F}_\mathrm{old}$, as initialization for Algorithm~\ref{alg:append_row} when sampling $\bm{h}_2$. Then, appending $\bm{h}_3=\bm{h}_1+\bm{h}_2$ to $\bm{H}$ yields no new 4-cycles. Note that this approach results in $\mathrm{w_H}(\bm{h}_3)=2\cdot d_\mathrm{c}$.
Algorithm~\ref{alg:append_row} is greedy and is not guaranteed to find a new row.
By iteratively applying Algorithm~\ref{alg:append_row}, it is possible to append multiple rows to $\bm{H}$ to constitute even smaller dimensional subcodes.

\subsection{Maximum-Coverage Heuristic}

Next, we are interested in selecting a good ensemble consisting of $K$ paths out of a larger set of candidate subcodes induced by different appended rows. 
Notably, such a subset does not necessarily consist of those candidates with the best stand-alone decoding performance, but those collaborating in the best way. 
To identify a subset of $K$ candidate paths that yields the best performance in ensemble decoding, we follow the approach used in~\cite{kraft_setmatrix}: select $c\in\mathbb{N}$ candidates paths and
 simulate the transmission of $N$ fixed but arbitrary frames at a specific SNR for all candidates. 
 Next, we construct sets $\mathcal{S}_i\subseteq [N]$, with $i\in[c]$, where $j\in\mathcal{S}_i$ if the $i$th candidate successfully decodes the $j$th received word.
 The task of finding a good ensemble can be modeled as a maximum coverage problem\cite{kraft_setmatrix}, which aims to find a subset $\mathcal{E}\subset\{S_1,\ldots, S_c\}$ of cardinality $|\mathcal{E}|=K$ maximizing $|\bigcup_{S\in\mathcal{E}}S|$\cite{10.1007/3-540-48777-8_2}.
We refine the approach in \cite{kraft_setmatrix} as follows:
we carry out BP decoding using $\bm{H}$ at an SNR yielding an FER of $10^{-3}$ until $N$ frame errors are accumulated.
Then, $j\in \mathcal{S}_i$ if $\mathrm{Dec}_i(\bm{y}_j)=\bm{x}_j$, where $\bm{x}_j$ and $\bm{y}_j$ denote the $j$th transmitted codeword and received frame, respectively,
i.e., we select $\Tilde{K}:=K-1$ paths that correct as many frames as possible when BP decoding on $\bm{H}$ fails. Together with decoding on $\bm{H}$, they constitute an ensemble of size $K$.
Since the maximum-coverage problem is NP-hard, we use the algorithm presented in \cite[Algorithm 3]{kraft_setmatrix}.%

\subsection{Comparison}\label{sec:comparison_ensemble}
\begin{figure}
    \centering
    \begin{tikzpicture}[scale=0.92,spy using outlines={rectangle, magnification=2}]

\begin{axis}[%
width=.8\columnwidth,
height=3.7cm,
at={(0.758in,0.645in)},
scale only axis,
xmin=1,
xmax=4,
xlabel style={font=\color{white!15!black}},
xlabel={$E_{\mathrm{b}}/N_0$ ($\si{dB}$)},
ymode=log,
ymin=1e-04,
ymax=1,
yminorticks=true,
ylabel style={font=\color{white!15!black}},
ylabel={FER},
axis background/.style={fill=white},
xmajorgrids,
ymajorgrids,
legend style={at={(0.03,0.03)}, anchor=south west, legend cell align=left, align=left, draw=white!15!black,font=\scriptsize}
]

\addplot[color=KITgreen,line width = 1pt,mark=x, mark options={solid}]
table[row sep=crcr]{
 1.00  5.076142e-01\\
 1.50  3.134796e-01\\
 2.00  1.434720e-01\\
 2.50  5.898835e-02\\
 3.00  1.854513e-02\\
 3.50  4.438428e-03\\
 4.00  9.914537e-04\\
};
\addlegendentry{SPA};

\addplot[color=KITpurple,line width = 1pt,mark=triangle, mark options={solid}]
table[row sep=crcr]{
 1.00  4.901961e-01\\
 1.50  2.490660e-01\\
 2.00  1.181335e-01\\
 2.50  4.115226e-02\\
 3.00  1.263025e-02\\
 3.50  2.587858e-03\\
 4.00  4.288145e-04\\
};
\addlegendentry{Ensemble 1};

\addplot[color=KITpurple!70,line width = 1pt,mark=square, mark options={solid}]
table[row sep=crcr]{
 1.00  4.608295e-01\\
 1.50  2.583979e-01\\
 2.00  1.185536e-01\\
 2.50  4.988153e-02\\%400FE
 3.00  1.183152e-02\\
 3.50  2.913880e-03\\
 4.00  5.224647e-04\\
};
\addlegendentry{Ensemble 2};

\addplot[color=KITpurple!40,line width = 1pt,mark=*, mark options={solid}]
table[row sep=crcr]{
 1.00  4.914005e-01\\
 1.50  2.797203e-01\\
 2.00  1.262626e-01\\
 2.50  4.230566e-02\\%simulated alone 400FE
 3.00  1.330318e-02\\
 3.50  2.631995e-03\\
 4.00  5.180555e-04\\
};
\addlegendentry{Ensemble 3};

\end{axis}

\end{tikzpicture}%
    \caption{ Performance of SCED using the different ensembles and stand-alone decoding for the 5G LDPC code $\mathcal{C}_{5\mathrm{G}}(132,66)$.}
    \label{fig:compare_construction}
    \vspace*{-1.5em}
\end{figure}
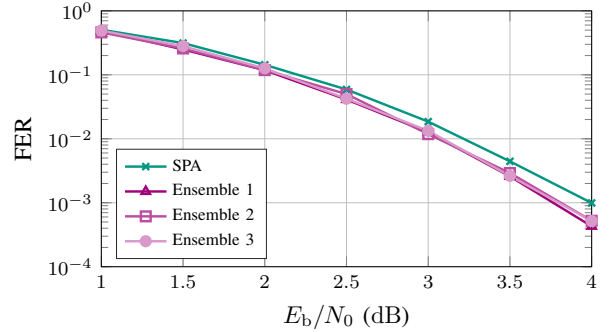
We compare different ensemble constructions at $\Tilde{K}=3$, i.e., the smallest size allowing the auxiliary paths to constitute an LC.
To this end, we consider the 5G LDPC code $\mathcal{C}_{5\mathrm{G}}(132,66)$ and sample ${N=\num{1000}}$ received words after an AWGN channel at an $E_{\mathrm{b}}/N_0$ of $4$ dB that stand-alone
 BP decoding using SPA with ${I_\mathrm{max}=32}$ could not decode, where $I_\mathrm{max}$ denotes the maximum number of iterations.
Note that for the 5G LDPC code, the PCM $\bm{H}$ has dimension $88\times154$ due to message bit puncturing \cite{5gstandard_2025_01}.
Using the greedy heuristic, we determine a coverage for $\Tilde{K}=3$ from ${c=3000}$ candidate rows which are sampled using a Bernoulli distribution with $p=\num{4.22}\%$ 
such that in average $\overline{\mathrm{w_H}(\bm{h})}=6.5$ (``Ensemble $1$''). 
Interestingly, this optimized set of subcodes---capable of decoding $591$ of the $\num{1000}$ frames---does not constitute an LC. 
Specifically, among other $\num{10000}$ randomly sampled codewords $11.88\%$ are not element of any of the $3$ subcodes.

Using $d_\mathrm{c}=6$, we run Algorithm~\ref{alg:append_row} to sample $\num{1000}$ tuples of rows $(\bm{h}_1,\bm{h}_2,\bm{h}_3)$ under two conditions: the rows constitute an LC (``Ensemble $2$'') and the rows do not constitute an LC (``Ensemble $3$'').
Note that in ``Ensemble $2$'', row $\bm{h}_3$ has row weight $\mathrm{w_H}(\bm{h}_3)=12$.
In contrast, ``Ensemble $3$'' does not yield an LC but instead runs Algorithm~\ref{alg:append_row} once per row of the tuple with $\mathcal{F}=[n]$, hence benefiting from $\mathrm{w_H}(\bm{h}_3)=6$.
  For both cases, we choose the tuple that decodes the most frames ($517$ and $526$ frames, respectively) from the gathered $N$ frames.

Fig.~\ref{fig:compare_construction} depicts the frame error rate (FER) over $E_{\mathrm{b}}/N_0$ of SCED given all three ensembles. All BP decodings use ${I_\mathrm{max}=32}$.
 Surprisingly, all ensembles yield a gain of about $0.25$ dB at an FER of $10^{-3}$ compared to stand-alone SPA decoding.  
 These results, which we also observed similarly for other codes, suggest that the slightly improved performance of ``Ensemble $1$'' (no LC, $\overline{\mathrm{w_H}(\bm{h}_3)}=\num{6.5}$) at an SNR of $4$ dB compared to ``Ensemble $2$'' (LC, $\mathrm{w_H}(\bm{h}_3)=\num{12}$) is not only because the third row has reduced weight. Hence, this indicates that, surprisingly, not all codewords must be element of at least one of the auxiliary subcodes.
Therefore, due to its simplicity, we will use the refined maximum-coverage heuristic used for ``Ensemble $1$'' with increased $c$ in Sec.~\ref{sec:results}.

\section{Comparison to Existing Schemes}

In \cite{mandelbaum2024endomorphisms}, we introduced EED which
involves $K$ parallel decoding paths. For each path, a distinct, not necessarily bijective endomorphism ${\tau_i:\mathcal{C}\rightarrow\mathcal{C}_i\subseteq\mathcal{C}}$ 
 is selected that maps codewords
onto a subcode $\mathcal{C}_i$, demonstrating some similarity to SCED. In decoding, EED performs three processing steps to mimic the effects of the endomorphisms in the LLR domain and to identify the most probable codeword within the set of possible preimages.
Similar to AED, EED samples different endomorphisms
to alter the noise representation.
In contrast, SCED generates diversity for ensemble decoding by sampling different subcodes and utilizing the different decoding behaviors of the respective subcode. 
Indeed, subcode decoding can also be applied to EED.
However, SCED avoids the processing of EED, which typically results in an information loss~\cite{mandelbaum2024endomorphisms}.

MBBP is another ensemble decoding scheme that improves the decoding performance of BP for algebraic codes in the short block length regime \cite{MBBP1}.
MBBP consists of $K$ parallel paths each incorporating a different PCM $\mathcal{C}$ of the code for BP. To this end, $K$ distinct and possibly overcomplete PCMs are generated and used to initialize the parallel decodings. However, unlike SCED, MBBP does not allow adding linearly independent rows but uses redundant representations of the kernel of the code. This complicates the search for suitable PCMs, as it relies on the NP-complete search for low-weight dual codewords\cite{kraft_setmatrix,intractabilityofminimumdistance}. Consequently, for SCED, the search for suitable PCMs is simplified because there exist many low-weight rows that are linearly independent of the rows of $\bm{H}$.

In \cite{stuttgart_ldpc_aed}, the authors show that in order to apply AED to quasi-cyclic (QC) LDPC codes the symmetry in the Tanner graph must be altered to use elements from the QC permutation automorphism group $\mathrm{Aut}_{\mathrm{QC}}$.
They propose three approaches for breaking the graph symmetry: adding rows, appending a linearly dependent row, or removing rows. The authors mainly consider the last method, due to its simplicity and because all approaches yield similar performance\cite{stuttgart_ldpc_aed}.
Interestingly, this approach constitutes decoding using an ambient code $\mathcal{C}_\mathrm{a}\supset \mathcal{C}$. Because the rows of a QC PCM are equivalent up to QC permutations, removing a row and applying a QC permutation is equivalent to removing one other row, i.e., $\forall i\in {[m]},\pi\in\mathrm{Aut}_{\mathrm{QC}}$, $\exists j\in{[m]}$ such that
\begin{equation}
    \pi(\bm{H}_{\sim i})=\bm{H}_{\sim j},\label{eq:qc_perm_eq}
\end{equation} where $\bm{H}_{\sim i}$ denotes $\bm{H}$ with the $i$th row removed.
Hence, we can gather $c=m$ candidates by removing every row of $\bm{H}\in\mathbb{F}_2^{m\times n}$ once, to generate an ensemble of PCMs of ambient codes that yield the same performance as AED when breaking the graph symmetry by removing a row. Thus, we refer to it as a row automorphism ensemble (R-AE). We use it as a comparison in Sec.~\ref{sec:results} to highlight the possible benefits of decoding on subcodes compared to decoding on ambient codes.
Note that (\ref{eq:inducing_subcode}) also typically breaks the graph symmetry.

\section{Results}\label{sec:results}

\subsection{Coverage Characteristics}\label{sec:coverage_characteristics}

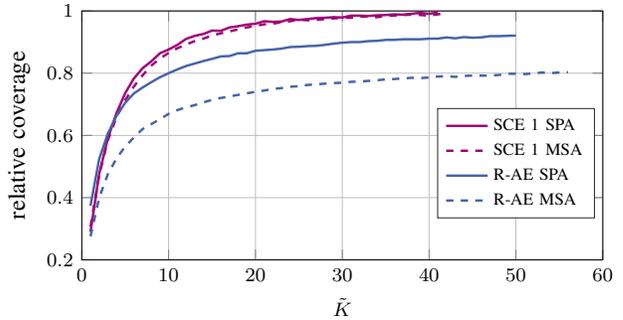
\begin{figure}
    \centering
    \begin{tikzpicture}[scale=0.92,spy using outlines={rectangle, magnification=2}]

\begin{axis}[%
width=.85\columnwidth,
height=3.6cm,
at={(0.758in,0.645in)},
scale only axis,
xmin=0,
xmax=60,
xlabel={$\Tilde{K}$},
ymin=0.2,
ymax=1,
yminorticks=true,
ylabel style={font=\color{white!15!black}},
ylabel={relative coverage},
axis background/.style={fill=white},
xmajorgrids,
ymajorgrids,
 legend style={at={(0.985,0.61)}, legend cell align=left, align=left, draw=white!15!black,font=\scriptsize}
]
    
\addplot[color=KITpurple,line width = 1pt, solid] 
table
{
1 0.29
2 0.476
3 0.591
4 0.672
5 0.736
6 0.783
7 0.816
8 0.838
9 0.864
10 0.876
11 0.891
12 0.91
13 0.92
14 0.926
15 0.937
16 0.936
17 0.947
18 0.951
19 0.956
20 0.959
21 0.967
22 0.965
23 0.966
24 0.974
25 0.971
26 0.974
27 0.977
28 0.978
29 0.98
30 0.979
31 0.985
32 0.985
33 0.984
34 0.986
35 0.987
36 0.989
37 0.988
38 0.991
39 0.994
40 0.991
41 0.995
};
\addlegendentry{SCE 1 SPA};

\addplot[color=KITpurple,line width = 1pt, dashed] 
table
{
1 0.3066933066933067
2 0.4725274725274725
3 0.5844155844155844
4 0.6593406593406593
5 0.7132867132867133
6 0.7572427572427572
7 0.7932067932067932
8 0.8241758241758241
9 0.8471528471528471
10 0.8651348651348651
11 0.8801198801198801
12 0.8921078921078921
13 0.903096903096903
14 0.9110889110889111
15 0.9210789210789211
16 0.9280719280719281
17 0.9340659340659341
18 0.9410589410589411
19 0.9470529470529471
20 0.952047952047952
21 0.957042957042957
22 0.9590409590409591
23 0.964035964035964
24 0.967032967032967
25 0.972027972027972
26 0.97002997002997
27 0.971028971028971
28 0.974025974025974
29 0.9760239760239761
30 0.978021978021978
31 0.978021978021978
32 0.981018981018981
33 0.983016983016983
34 0.98001998001998
35 0.984015984015984
36 0.985014985014985
37 0.987012987012987
38 0.988011988011988
39 0.987012987012987
40 0.985014985014985
41 0.989010989010989
42 0.993006993006993
};
\addlegendentry{SCE 1 MSA};

\addplot[color=KITblue,line width = 1pt] 
table
{
1 0.374
2 0.524
3 0.606
4 0.663
5 0.705
6 0.735
7 0.754
8 0.771
9 0.787
10 0.8
11 0.812
12 0.823
13 0.831
14 0.839
15 0.846
16 0.854
17 0.855
18 0.864
19 0.864
20 0.872
21 0.874
22 0.876
23 0.879
24 0.884
25 0.885
26 0.887
27 0.888
28 0.891
29 0.895
30 0.898
31 0.9
32 0.9
33 0.903
34 0.904
35 0.907
36 0.907
37 0.909
38 0.909
39 0.909
40 0.911
41 0.911
42 0.915
43 0.912
44 0.916
45 0.914
46 0.915
47 0.919
48 0.919
49 0.92
50 0.921
};
\addlegendentry{R-AE SPA};

\addplot[color=KITblue,dashed,line width = 1pt] 
table
{
1 0.2757242757242757
2 0.3996003996003996
3 0.4755244755244755
4 0.5264735264735265
5 0.5644355644355644
6 0.5934065934065934
7 0.6203796203796204
8 0.6363636363636364
9 0.6523476523476524
10 0.6693306693306693
11 0.6813186813186813
12 0.6903096903096904
13 0.6973026973026973
14 0.7062937062937062
15 0.7142857142857143
16 0.7202797202797203
17 0.7252747252747253
18 0.7302697302697303
19 0.7352647352647352
20 0.7402597402597403
21 0.7442557442557443
22 0.7492507492507493
23 0.7522477522477522
24 0.7562437562437563
25 0.7582417582417582
26 0.7622377622377622
27 0.7642357642357642
28 0.7662337662337663
29 0.7682317682317682
30 0.7692307692307693
31 0.7722277722277723
32 0.7732267732267732
33 0.7752247752247752
34 0.7772227772227772
35 0.7802197802197802
36 0.7812187812187812
37 0.7832167832167832
38 0.7852147852147852
39 0.7842157842157842
40 0.7862137862137862
41 0.7892107892107892
42 0.7902097902097902
43 0.7882117882117882
44 0.7912087912087912
45 0.7922077922077922
46 0.7922077922077922
47 0.7922077922077922
48 0.7942057942057942
49 0.7982017982017982
50 0.7982017982017982
51 0.7972027972027972
52 0.7992007992007992
53 0.8021978021978022
54 0.8001998001998002
55 0.8021978021978022
56 0.8031968031968032
};
\addlegendentry{R-AE MSA};

\end{axis}
\end{tikzpicture}
    \vspace*{-0.5em}
\caption{
Relative coverage as a function of the number of additional paths $\Tilde{K}$}
\label{fig:cov_plots}
\vspace*{-1.5em}
\end{figure}

For code $\mathcal{C}_{5\mathrm{G}}(132,66)$, we accumulate $N=\num{1000}$ frame errors from stand-alone decoding using SPA and MSA, respectively. For MSA, we employ a normalization factor of $\frac{3}{4}$.
Next, we construct $c=\num{35000}$ candidate rows, using a Bernoulli distribution
with $p=\num{4.22}\%$
as in Sec.~\ref{sec:comparison_ensemble}. We denote the collected set subcode ensemble (SCE).

We define the \emph{relative coverage} of a set of auxiliary paths as the ratio of the $N$ frames they can decode.
Note that, after collecting $N$ received words after the AWGN channel that BP with the original $\bm{H}$ can not decode, the relative coverage enables us to compare the performance of ensembles based on their performance on those $N$ frames without requiring exhaustive simulations of error rates for each $\Tilde{K}$.
Assuming that the first path employs $\mathcal{C}_1=\mathcal{C}$, Fig.~\ref{fig:cov_plots} depicts the relative coverage as a function of the number of additional paths $\Tilde{K}=K-1$.
We iteratively increase $\Tilde{K}$ and use the greedy heuristic to obtain optimized ensembles until reaching $\Tilde{K}_\mathrm{max}$,
the smallest $\Tilde{K}$ such that the selected candidates collectively cover all patterns that the union of all candidates can decode.
Further increasing $\Tilde{K}$ does not yield a larger relative coverage.

For both SPA and MSA, SCE achieves a very high relative coverage exceeding $98\%$ for sufficiently large $\Tilde{K}$. In contrast, R-AE only achieves a maximum relative coverage of $\num{92.1}\%$ and $\num{80.3}\%$ for SPA and MSA, respectively.
Notably, for the practically relevant MSA, R-AE shows significantly lower relative coverage compared to SCE.

\vspace*{-0.25em}
\subsection{Frame Error Rate Results}
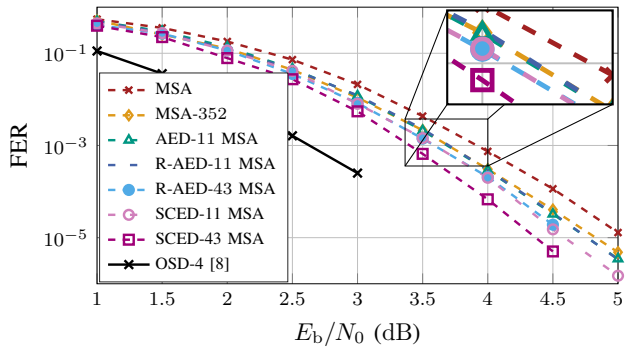
\begin{figure}
    \centering
    \begin{tikzpicture}[scale=0.92,spy using outlines={rectangle, magnification=2}]

\begin{axis}[%
width=.85\columnwidth,
height=4cm,
at={(0.758in,0.645in)},
scale only axis,
xmin=1,
xmax=5,
xlabel style={font=\color{white!15!black}},
xlabel={$E_{\mathrm{b}}/N_0$ ($\si{dB}$)},
ymode=log,
ymin=1e-06,
ymax=1,
yminorticks=true,
ylabel style={font=\color{white!15!black}},
ylabel={FER},
axis background/.style={fill=white},
xmajorgrids,
ymajorgrids,
legend style={at={(0.004,0.005)}, anchor=south west, legend cell align=left, align=left, draw=white!15!black,font=\scriptsize}
]

\addplot[color=KITred,dashed,line width = 1pt,mark=x, mark options={solid}]
table[row sep=crcr]{
 1.0  5.467980e-01\\
 1.5  3.553223e-01\\
 2.0  1.798365e-01\\
 2.5  7.278189e-02\\
 3.0  2.111038e-02\\
 3.5  4.332920e-03\\
 4.0  7.442976e-04\\
 4.5  1.148381e-04\\
 5.0  1.290430e-05\\
};
\addlegendentry{MSA};

\addplot[color=KITorange,dashed,line width = 1pt,mark=diamond, mark options={solid}]
table[row sep=crcr]{
 1.00  5.141388e-01\\
 1.50  2.551020e-01\\
 2.00  1.264223e-01\\
 2.50  4.380201e-02\\
3.00  1.090275e-02\\
 3.50  2.144956e-03\\
 4.00  3.055329e-04\\
 4.50  4.053300e-05\\
 5.00  4.830237e-06\\
};
\addlegendentry{MSA-$352$};

\addplot[color=KITgreen,loosely dashed,line width = 1pt,mark=triangle, mark options={solid}]
table[row sep=crcr]{
 1.00  4.926108e-01\\
 1.50  2.998501e-01\\
 2.00  1.362398e-01\\
 2.50  4.621072e-02\\
 3.00  1.192677e-02\\
 3.50  2.103359e-03\\
 4.00  2.896100e-04\\
4.50  3.245119e-05\\
 5.00  3.464770e-06\\%200FE
};
\addlegendentry{AED-$11$ MSA};

\addplot[color=KITblue,line width = 1pt,loosely dashed]
table[row sep=crcr]{
    1.00  4.878049e-01\\
    1.50  2.801120e-01\\
    2.00  1.250000e-01\\
    2.50  4.495392e-02\\
    3.00  1.221896e-02\\
    3.50  2.293710e-03\\
    4.00  2.946116e-04\\
    4.50  3.401602e-05\\
     5.00  3.355666e-06\\%230FE
};
\addlegendentry{R-AED-$11$ MSA};

\addplot[color=KITcyanblue,line width = 1pt,mark=*, dashed,mark options={solid}]
table[row sep=crcr]{
 1.00  4.106776e-01\\
 1.50  2.663116e-01\\
 2.00  1.152074e-01\\
 2.50  3.555556e-02\\
 3.00  8.322237e-03\\
3.50  1.407559e-03\\
4.00  2.073170e-04\\
4.50  1.911417e-05\\
};
\addlegendentry{R-AED-$43$ MSA};

\addplot[color=KITpurple!50,line width = 1pt,mark=o, mark options={solid},loosely dashed]
table[row sep=crcr]{
 1.00  4.464286e-01\\
 1.50  2.743484e-01\\
 2.00  1.147447e-01\\
 2.50  4.145078e-02\\
 3.00  8.008329e-03\\
 3.50  1.445713e-03\\
 4.00  1.989583e-04\\
4.50  1.514168e-05\\
 5.00  1.494075e-06\\%146FE
};
\addlegendentry{SCED-11 MSA};

\addplot[color=KITpurple,dashed,line width = 1pt,mark=square, mark options={solid}]
table[row sep=crcr]{
 1.00  3.960396e-01\\
 1.50  2.242152e-01\\
 2.00  7.855460e-02\\
 2.50  2.734108e-02\\
 3.00  5.499945e-03\\
 3.50  6.574903e-04\\
 4.00  6.791694e-05\\
4.50  5.026574e-06\\
};
\addlegendentry{SCED-43 MSA};

\addplot[color=black,line width = 1pt, solid,mark=x, mark size=2.5pt, mark options={solid}]
table[col sep=comma]{
1.00, 1.120e-01
1.50, 3.609e-02
2.00, 9.891e-03
2.50, 1.623e-03
3.00, 2.514e-04
};
\label{plot:5g132_osd}
\addlegendentry{OSD-4 \cite{stuttgart_ldpc_aed}};

\coordinate (spypoint) at (axis cs:3.41,0.0005);
			\coordinate (spyviewer) at (axis cs:4,0.025);	
			\spy[width=2.2cm,height=1.25cm, thin, spy connection path={\draw(tikzspyonnode.south west) -- (tikzspyinnode.south west);\draw (tikzspyonnode.south east) -- (tikzspyinnode.south east);
			\draw (tikzspyonnode.north west) -- (tikzspyinnode.north west);\draw (tikzspyonnode.north east) -- (intersection of  tikzspyinnode.north east--tikzspyonnode.north east and tikzspyinnode.south east--tikzspyinnode.south west);
			;}] on (spypoint) in node at (spyviewer);
		\coordinate (a) at ($(axis cs:-10.8/1.4,-0.12)+(spyviewer)$);
		\coordinate[label={[font=\small,text=black]right:$10^{-2}$}] (b) at ($(axis cs:+10.8/1.4,-0.12)+(spyviewer)$);
\end{axis}

\end{tikzpicture}%
    \vspace*{-0.5em}
    \caption{ Decoder performances for the 5G LDPC code $\mathcal{C}_{5\mathrm{G}}(132,66)$. }
    \label{fig:res_5g}
    \vspace*{-1em}
\end{figure}
\begin{figure}
    \centering
    \begin{tikzpicture}[scale=0.92,spy using outlines={rectangle, magnification=2}]

\begin{axis}[%
width=.85\columnwidth,
height=4cm,
at={(0.758in,0.645in)},
scale only axis,
xmin=1,
xmax=3,
xlabel style={font=\color{white!15!black}},
xlabel={$E_{\mathrm{b}}/N_0$ ($\si{dB}$)},
ymode=log,
ymin=1e-5,
ymax=1,
yminorticks=true,
ylabel style={font=\color{white!15!black}},
ylabel={FER},
axis background/.style={fill=white},
xmajorgrids,
ymajorgrids,
legend style={at={(0.03,0.03)}, anchor=south west, legend cell align=left, align=left, draw=white!15!black,font=\scriptsize}
]

\addplot[color=KITcyanblue,line width = 1pt,mark=x, mark options={solid}]
table[row sep=crcr]{
 1.00  4.184100e-01\\
 1.50  1.296176e-01\\
 2.00  1.907942e-02\\
 2.50  1.464976e-03\\
 3.00  7.492838e-05\\
 3.50  5.033615e-06\\
 4.00  4.250000e-07\\
};
\addlegendentry{SPA};

\addplot[color=KITpurple,line width = 1pt,mark=o, mark options={solid}]
table[row sep=crcr]{
 1.00  3.929273e-01\\
 1.50  9.832842e-02\\
 2.00  1.204602e-02\\
 2.50  5.844501e-04\\
 3.00  1.785529e-05\\
};
\addlegendentry{SCED-$11$ SPA};

\addplot[color=KITred,dashed,line width = 1pt,mark=triangle, mark options={solid}]
table[row sep=crcr]{
 1.00  5.856515e-01\\
 1.50  2.240896e-01\\
 2.00  3.575579e-02\\
 2.50  2.561262e-03\\
 3.00  1.663773e-04\\
 3.50  1.981652e-05\\
 4.00  2.450000e-06\\
};
\addlegendentry{MSA};

\addplot[color=KITorange,line width = 1pt,dashed,mark=square, mark options={solid}]
table[row sep=crcr]{
 1.00  4.914005e-01\\
 1.50  1.669449e-01\\
 2.00  2.333178e-02\\
 2.50  1.127605e-03\\
 3.00  3.520119e-05\\
};
\addlegendentry{SCED-$11$ MSA};

\end{axis}

\end{tikzpicture}%
    \vspace*{-0.5em}
         \caption{ Decoder performances for the code $\mathcal{C}_{\mathrm{irPEG}}(504,252)$ from \cite{mackay_codes_files}.}
         \label{fig:res_peg}
         \vspace*{-1.5em}
\end{figure}
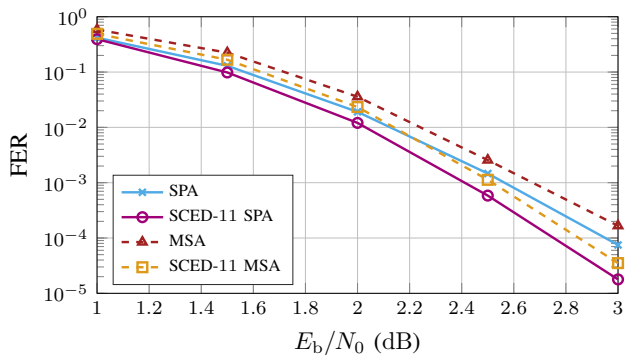
For BP decoding, we use an early stopping criterion if the current hard decision of the VNs fulfills $\hat{\bm{x}}\in\mathcal{C}$.
Let ${\lambda_i\leq I_\mathrm{max}}$ denote the actual number of iterations of BP decoding of the $i$th path, $i\in[K]$.
Assuming that all decodings of an ensemble decoding scheme are executed in parallel, we define the \emph{latency} as $\max_{i\in[K]} \lambda_i $  and the \emph{complexity} as $\sum_{i\in[K]} \lambda_i$. This notion of complexity is reasonable because the number of rows of the different PCMs is comparable.
Next, considering a target FER of $10^{-3}$, we analyze the performance of SCED for the QC LDPC code 
$\mathcal{C}_{5\mathrm{G}}(132,66)$ and the irregular LDPC code constructed using progressive edge growth
$\mathcal{C}_{\mathrm{irPEG}}(504,252)$ from \cite{mackay_codes_files} with unknown automorphism group. Hence, AED is not directly applicable for the code $\mathcal{C}_{\mathrm{irPEG}}(504,252)$.

To evaluate the performance of SCED, we perform Monte-Carlo simulations using a binary input AWGN channel collecting at least $\num{200}$ frame errors per data point. For consistency with \cite{stuttgart_ldpc_aed}, all BP decodings use ${I_\mathrm{max}=32}$ unless stated otherwise. 
For the codes $\mathcal{C}_{5\mathrm{G}}$ and $\mathcal{C}_{\mathrm{irPEG}}$, we generate ${c=\num{35000}}$ candidates paths with the row entries sampled using a Bernoulli distribution with $p=\num{4.22}\%$ and $p={1.29}\%$, respectively.
The notation SCED-$K$ refers to SCED using 
a total of $K$ paths, i.e., ${\Tilde{K}=K-1}$ auxiliary subcodes chosen from the $c$ candidates using the heuristic maximum-coverage combined with the decoding on $\bm{H}$.
Fig.~\ref{fig:res_5g}-\ref{fig:res_peg} show the FER over $E_{\mathrm{b}}/N_0$ for SCED of both codes with varying ensemble sizes, compared to the respective stand-alone BP decodings. 

In Fig.~\ref{fig:res_5g}, we also include the performance of AED-$11$, implemented according to \cite{stuttgart_ldpc_aed}.
Furthermore, we depict the performance of ensemble decoding using $10$ and $43$ paths gathered from R-AE using the greedy heuristic combined with decoding on $\bm{H}$, denoted as R-AED-$11$ and R\nobreakdash-AED\nobreakdash-$43$, respectively. We also provide the performance of ordered statistics decoding with order $4$ (OSD-$4$) from \cite{stuttgart_ldpc_aed} as an estimate of the ML performance.
Note that, as expected due to (\ref{eq:qc_perm_eq}), R-AED-$11$ and AED-$11$ yield identical performance.

Reflecting the higher relative coverage, SCED consistently yields gains compared to stand-alone BP decoding and AED with equal worst-case latency.
In Fig.~\ref{fig:res_5g}, SCED-$11$ yield gains of $0.3$ dB and $0.1$ dB compared to MSA and AED-$11$, respectively.
Notably, in Fig.~\ref{fig:res_5g}, while offering significantly reduced latency, SCED-$11$ achieves a gain of $0.1$~dB compared to equal-complexity stand-alone MSA using ${I_\mathrm{max}=352}$, denoted as MSA-$352$.
To evaluate the full potential of SCED, we also consider configurations using $K_\mathrm{max}=43$ auxiliary paths. For the 5G LDPC code, SCED-$43$ significantly outperforms R-AED-$43$ and reduces the gap to OSD-$4$ to $0.8$ dB.

Finally, for the code $\mathcal{C}_{\mathrm{irPEG}}(504,252)$ the automorphism group is unknown. Nevertheless, %
SCED ensembles can be designed straightforwardly based solely on the knowledge of the PCM $\bm{H}$.
Fig.~\ref{fig:res_peg} demonstrates that SCED-$11$ yield gains of approximately $0.2$~dB compared to both MSA and SPA.
\vspace*{-0.25em}
\subsection{Average Latency \& Qualitative Results}

As demonstrated, SCED yields a lower worst-case latency compared to AED and stand-alone decoding.
Yet, our simulations reveal that the average number of iterations of the BP decoding on the proper subcodes is increased compared to the first path, i.e., when decoding on $\bm{H}$. 
This is expected, as in an average of $50\%$ of the cases, the decoding of a proper subcode ${\mathcal{C}_i\subset \mathcal{C}}$ attempts to decode a codeword  ${\bm{x}\in \mathcal{C}\setminus \mathcal{C}_i}$.

Interestingly, when considering the average number of iterations for codewords that belong to the respective subcode, we even observe a decrease in the average number of iterations relative to the first path. Hence, we expect that introducing stopping mechanisms as in \cite{7360541} can maintain the error correction capabilities and reduce the average number of iterations.

In \cite[Table II]{krieg2024comparativestudyensembledecoding_arxiv}, the authors qualitatively compare the decoding gains of various ensemble decoding schemes for BP decoding with their requirements on the code and decoding structure. Table~\ref{table:qualitatively_comparison} extends this comparison to SCED, demonstrating that SCED achieves significant coding gains without imposing any requirements on the code and decoding structure.
\begin{table}
        \centering
        \caption{Qualitative comparison of ensemble decoding methods for BP decoding taken from 
        \vspace*{-0.5em}
        \cite{krieg2024comparativestudyensembledecoding_arxiv} in comparison to SCED.}\label{table:qualitatively_comparison}
        \begin{tabular}{cccc}
                \toprule
                Decoder &   Code Requirements & Decoder Requirements & Gain \\
                \midrule                
                MBBP& $\gg m$ Min. Weight Checks & -- & $++$ \\
                AED& Code Automorphisms & Non-Equivariance& $++$ \\
                SED& -- & Layered Decoder & $++$\\
                NED, SBP& -- & -- & $+$ \\
                \textcolor{KITgreen}{SCED}& \textcolor{KITgreen}{--} & \textcolor{KITgreen}{--} & $\color{KITgreen}{++}$ \\
                \bottomrule
        \end{tabular}
        \vspace*{-1.5em}
\end{table}
\vspace*{-1.5em}
\section{Conclusion}
In this work, we introduced SCED, an ensemble decoding scheme that leverages multiple decodings on subcodes of the original code. We discussed the concept of LCs for SCED and observed that, for BP decoding, effective ensembles can be sampled without ensuring that the auxiliary paths constitute an LC.
Our results demonstrate that for LDPC codes and BP decoding, SCED achieves improved decoding performance compared to both stand-alone decoding and AED. Notably, SCED does not rely on specific knowledge of the structure of the code and decoding, enabling the straightforward construction of good ensembles and making it easily adaptable to various codes.
Extending SCED to other code families, e.g., polar codes, is part of our ongoing research.

\IEEEtriggeratref{11}

\newpage

\appendix

We first provide Lemma~\ref{lemma:proof_of_existence} showing the existence of LCs consisting of $3$ proper subcodes, whose proof contains an important construction used in the upcoming proof of Theorem~\ref{theorem:lc_k-1_dim}:

\begin{lemma}\label{lemma:proof_of_existence}
    Let $\mathcal{C}\subseteq\mathbb{F}_2^{n}$ be a code constituting a $k$-dimensional vector space over $\mathbb{F}_2$. Then there exist $3$ proper $(k-1)$-dimensional subcodes $\mathcal{C}_1, \mathcal{C}_2, \mathcal{C}_3$, such that
    \begin{equation}\label{lemma_union}
            \mathcal{C}_1\cup \mathcal{C}_2 \cup \mathcal{C}_3=\mathcal{C}.
    \end{equation}
\end{lemma}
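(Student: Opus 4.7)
The plan is to exploit an $\mathbb{F}_2$-specific fact: the plane $\mathbb{F}_2^2$ has exactly three proper nonzero subspaces, namely $\{(0,0),(1,0)\}$, $\{(0,0),(0,1)\}$, and $\{(0,0),(1,1)\}$, whose union is all of $\mathbb{F}_2^2$. Pulling this covering back through any surjective linear map $\varphi:\mathcal{C}\to\mathbb{F}_2^2$ yields a covering of $\mathcal{C}$ by three hyperplanes, i.e.\ by three proper $(k-1)$-dimensional subspaces. Such a $\varphi$ exists whenever $k\geq 2$ (which is the only interesting regime, since for $k=1$ the only proper subcode is $\{\bm{0}\}$). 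The construction is additionally chosen so that it is exactly the one invoked later in Theorem~\ref{theorem:lc_k-1_dim}.

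Concretely, I would pick two linear functionals $f_1,f_2:\mathcal{C}\to\mathbb{F}_2$ that are linearly independent in the dual space $\mathcal{C}^*$; in the paper's row-appending language this amounts to choosing two rows $\bm{h}_1,\bm{h}_2\in\mathbb{F}_2^{1\times n}$ that are linearly independent modulo the row span of $\bm{H}$ and setting $f_i(\bm{x}):=\bm{h}_i\bm{x}$. Define $f_3:=f_1+f_2$ and $\mathcal{C}_i:=\ker f_i$ for $i=1,2,3$. Each $f_i$ is a nonzero functional on $\mathcal{C}$ (linear independence of $f_1,f_2$ guarantees $f_3\neq 0$ as well), so each $\mathcal{C}_i$ is a proper $(k-1)$-dimensional subcode, establishing half of the lemma.

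To verify (\ref{lemma_union}), take any $\bm{x}\in\mathcal{C}$ and consider the pair $(f_1(\bm{x}),f_2(\bm{x}))\in\mathbb{F}_2^2$. If $f_1(\bm{x})=0$ then $\bm{x}\in\mathcal{C}_1$; if $f_2(\bm{x})=0$ then $\bm{x}\in\mathcal{C}_2$; in the only remaining case $(f_1(\bm{x}),f_2(\bm{x}))=(1,1)$, one has $f_3(\bm{x})=1+1=0$ in $\mathbb{F}_2$, so $\bm{x}\in\mathcal{C}_3$. Consequently every element of $\mathcal{C}$ lies in at least one $\mathcal{C}_i$, giving $\mathcal{C}_1\cup\mathcal{C}_2\cup\mathcal{C}_3=\mathcal{C}$. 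There is essentially no obstacle here: the entire argument rests on the identity $1+1=0$ in $\mathbb{F}_2$ (over $\mathbb{F}_q$ with $q\geq 3$ one would instead need $q+1$ hyperplanes). The only subtlety worth flagging is that one must ensure $f_1,f_2$ are linearly independent as elements of $\mathcal{C}^*$ rather than merely distinct as rows of $\mathbb{F}_2^{1\times n}$; once this is guaranteed, the subspaces $\mathcal{C}_1,\mathcal{C}_2,\mathcal{C}_3$ coincide exactly with those induced via (\ref{eq:inducing_subcode}) by appending $\bm{h}_1,\bm{h}_2,\bm{h}_1+\bm{h}_2$, which is precisely the construction needed for Theorem~\ref{theorem:lc_k-1_dim}.
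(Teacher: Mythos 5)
Your proof is correct, but it takes the dual route rather than the paper's primal one. The paper proves Lemma~\ref{lemma:proof_of_existence} by fixing a basis $\{\bm{b}_1,\ldots,\bm{b}_k\}$ of $\mathcal{C}$ and exhibiting the three subcodes explicitly through the bases $\{\bm{b}_1,\bm{b}_3,\ldots,\bm{b}_k\}$, $\{\bm{b}_2,\bm{b}_3,\ldots,\bm{b}_k\}$ and $\{\bm{b}_1+\bm{b}_2,\bm{b}_3,\ldots,\bm{b}_k\}$, leaving the verification of (\ref{lemma_union}) implicit (it amounts to the case split on the coefficients $(\alpha_1,\alpha_2)$ of a codeword). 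You instead describe the same three hyperplanes as kernels of two linearly independent functionals $f_1,f_2$ and their sum $f_3=f_1+f_2$, i.e.\ as the pullback of the covering of $\mathbb{F}_2^2$ by its three lines, and you carry out the covering check explicitly via the identity $1+1=0$. The two constructions are dual descriptions of one another (the paper's bases are exactly the kernels of the coordinate functionals $\alpha_2$, $\alpha_1$, $\alpha_1+\alpha_2$), so nothing essential differs; what your formulation buys is that it matches the row-appending mechanism (\ref{eq:inducing_subcode}) directly, so it is simultaneously the core of Case~2 in the paper's proof of Theorem~\ref{theorem:lc_k-1_dim}, and it makes the union argument fully explicit, whereas the paper's basis construction is arguably more concrete and needs no dual-space language. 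Your remark that $k\geq 2$ is required (and that the statement is vacuous or false for $k\leq 1$) is a point the paper glosses over, and your caveat that $\bm{h}_1,\bm{h}_2$ must be independent modulo the row span of $\bm{H}$, not merely distinct, is exactly the right one.
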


\begin{proof}[Proof of Lemma \ref{lemma:proof_of_existence}] Consider the basis $\{\bm{b}_1,\bm{b}_2\ldots,\bm{b}_k\}$ of $\mathcal{C}$, e.g., given by the rows of the generator matrix. Then, the $(k-1)$-dimensional subcodes $\mathcal{C}_1, \mathcal{C}_2, \mathcal{C}_3$ with bases
\begin{align*}
    \mathcal{B}_1&=\{\bm{b}_1,\bm{b}_3\ldots,\bm{b}_k\},\\
    \mathcal{B}_2&=\{\bm{b}_2,\bm{b}_3,\bm{b}_4\ldots,\bm{b}_k\},\\
    \mathcal{B}_3&=\{\bm{b}_1+\bm{b}_2,\bm{b}_3,\bm{b}_4\ldots,\bm{b}_k\},\\
\end{align*}
respectively, fulfill (\ref{lemma_union}).
\end{proof}

With a method for constructing LCs at hand, we can now prove Theorem \ref{theorem:lc_k-1_dim}:

 \begin{proof}[Proof of Theorem \ref{theorem:lc_k-1_dim}]
Let $\mathcal{C}$ be a binary linear code with PCM $\bm{H}\in \mathbb{F}_2^{m\times n}$ and let $\bm{h}_1,\bm{h}_2\in \mathbb{F}^{1\times n}$ be two row vectors that are linearly independent of the rows of $\bm{H}$, but not necessarily mutually independent. Since $\bm{h}_1, \bm{h}_2$ are linearly independent of the rows of $\bm{H}$, appending each of them to $\bm{H}$ according to (\ref{eq:inducing_subcode}) yields two PCMs denoted as $\bm{H}_1,\bm{H}_2$, respectively, which induce two proper subcodes $\mathcal{C}_1 \subset\mathcal{C},\mathcal{C}_2 \subset\mathcal{C}$, respectively. Note that the $\mathcal{C}_1$ and $\mathcal{C}_2$ are not necessarily distinct. We distinguish two cases. 

\textbf{Case 1:} Let $\mathcal{C}_1=\mathcal{C}_2$. In this case, appending $\bm{h}_2$ to $\bm{H}_1$ does not change the null space of $\bm{H}_1$, i.e., $\bm{h}_2$ is linearly dependent of the rows of
$\bm{H}_1$. Due to the assumption that both $\bm{h}_1$ and $\bm{h}_2$ are linearly independent of the rows of $\bm{H}$ and because appending both of them to $\bm{H}$ only increases the rank of the matrix by $1$ compared to $\bm{H}$, it follows that $\bm{h}_3=\bm{h}_1+\bm{h}_2$ is linearly dependent of the rows of $\bm{H}$. %
Hence, appending $\bm{h}_3$ to $\bm{H}$ according to (\ref{eq:inducing_subcode}) induces the subcode $\mathcal{C}_3=\mathcal{C}$ and, trivially, $\{\mathcal{C}_1,\mathcal{C}_2,\mathcal{C}_3\}$ constitute an LC.

\textbf{Case 2:} Let $\mathcal{C}_1\neq\mathcal{C}_2$. Thus, ${\bm{h}_1\neq\bm{h}_2}$, since ${\bm{h}_1=\bm{h}_2}$ would result in Case 1.
Note that the assumption $\mathcal{C}_1\neq\mathcal{C}_2$ is equivalent to $\bm{h}_3=\bm{h}_1+\bm{h}_2$ being linearly independent of the rows of $\bm{H}$ since otherwise appending $\bm{h}_2$ to $\bm{H}_1$ would not increase the rank of $\bm{H}_1$, i.e., $\mathcal{C}_1=\mathcal{C}_2$.
Furthermore, since both subcodes are proper subcodes of dimension $k-1$, i.e., $|\mathcal{C}_1|=|\mathcal{C}_2|$, that are distinct, there exist codewords $\bm{b}_1\in\mathcal{C}_1\setminus \mathcal{C}_2\subset \mathcal{C}$ and  $\bm{b}_2\in\mathcal{C}_2\setminus \mathcal{C}_1\subset \mathcal{C}$. By construction, $\bm{b}_1$, $\bm{b}_2$ 
have the properties that\footnote{Note that $\bm{h}_i$ are row vectors, whereas $\bm{b}_i$ are column vectors.} 
\begin{align}
&\bm{H}\bm{b}_1=\bm{0},    &\bm{H}\bm{b}_2=\bm{0},\nonumber\\
&\bm{h}_1 \bm{b}_1=0,  &\bm{h}_2\bm{b}_2=0,\label{eq:properties_of_bases}\\
&\bm{h}_2 \bm{b}_1=1,  &\bm{h}_1\bm{b}_2=1\nonumber,
\end{align}
 
 since, otherwise,
$$
\begin{pmatrix}
    \bm{H} \\ \bm{h}_2
\end{pmatrix}
\bm{b}_1
=
\bm{0}
$$
contradicting the choice of $\bm{b}_1$ and, similarly, for $\bm{b}_2$.

Since $\bm{b}_1, \bm{b}_2\in\mathcal{C}$ and since $\bm{b}_1\neq\bm{b}_2$ which implies that they are linearly independent, there exists a basis 
\begin{align*}
    \mathcal{B} 
    &=
    \{\bm{b}_1,\bm{b}_2, \bm{b}_3\ldots,\bm{b}_k\}
\end{align*}
for $\mathcal{C}$ such that
\begin{align*}
\mathcal{B}_1&=\{\bm{b}_1,\bm{b}_3\ldots,\bm{b}_k\}\\    
\mathcal{B}_2&=\{\bm{b}_2,\bm{b}_3,\bm{b}_4\ldots,\bm{b}_k\}
\end{align*}
form a basis of $\mathcal{C}_1$ and $\mathcal{C}_2$, respectively. Now, choosing ${\bm{h}_3=\bm{h}_1+\bm{h}_2\neq\bm{0}}$ and appending $\bm{h}_3$ to $\bm{H}$ induces a proper subcode $\mathcal{C}_3\subset\mathcal{C}$.

It remains to show that $\mathcal{C}_1\cup\mathcal{C}_2\cup\mathcal{C}_3=\mathcal{C}$. First, note that all codewords composed of linear combinations within $\mathcal{B}_1$ or $\mathcal{B}_2$ alone are already covered by $\mathcal{C}_1$ or $\mathcal{C}_2$. Hence, the remaining codewords $\bm{x}\in \mathcal{C}\setminus (\mathcal{C}_1\cup\mathcal{C}_2)$
are of the form
\begin{equation}\label{eq:not_in_c1_c2}
    \bm{x}=\bm{b}_1+\bm{b}_2+\sum\limits_{i=3}^k \alpha_i \bm{b}_i
\end{equation}
 and are, thus, covered by $\mathcal{C}_3$ since (\ref{eq:not_in_c1_c2}) yields:
$$
\begin{pmatrix}
    \bm{H} \\ \bm{h}_3
\end{pmatrix}
\bm{x}
=
\begin{pmatrix}
    \bm{H} \\ \bm{h}_1+\bm{h}_2
\end{pmatrix}
\cdot \left( \bm{b}_1+\bm{b}_2+\sum\limits_{i=3}^k \alpha_i \bm{b}_i\right)
$$
The first $m$ rows are equal to zero since $\bm{x}\in\mathcal{C}$, and the last row becomes:
\begin{align*}
&\quad 
(\bm{h}_1+\bm{h}_2)\cdot \left( \bm{b}_1+\bm{b}_2+\sum\limits_{i=3}^k \alpha_i \bm{b}_i\right)
\\
&\stackrel{(a)}{=}
0 + 1 + 0+ 1 + 0 + 0 
= 
0
\end{align*}
where $(a)$ is due to expanding the terms and using (\ref{eq:properties_of_bases}).
\end{proof}

\emph{Remark:} Note that, for practical codes, random sampling of rows $\bm{h}_1$ and $\bm{h}_2$ typically results in the second case.

\end{document}